\DeclareSymbolFont{bbold}{U}{bbold}{m}{n}
\DeclareSymbolFontAlphabet{\mathbbm}{bbold}
\DeclareMathAlphabet{\mathcal}{OMS}{cmsy}{m}{n}
\DeclareMathAlphabet{\mathscr}{LS1}{stixscr}{m}{n}
\newcommand{\trs}{\prime}
\DeclareMathOperator*{\argmin}{arg\,min}
\newtheorem{theorem}{Theorem}
\newtheorem{assumption}{Assumption}
\newtheorem{proposition}{Proposition}
\newtheorem{lemma}{Lemma}
\newtheorem{definition}{Definition}
\newtheorem{remark}{Remark}
\title{\LARGE \bf On Riccati contraction in time-varying linear-quadratic control}
\author{Jintao Sun and Michael Cantoni
\thanks{This work was supported by a Melbourne Research Scholarship and the Australian Research Council (DP210103272).}
\thanks{J. Sun is a PhD student at the Department of Electrical and Electronic Engineering, The University of Melbourne, Melbourne, Australia
        {\tt\small jintaos@student.unimelb.edu.au}}%
\thanks{M. Cantoni is with the Department of Electrical and Electronic Engineering, The University of Melbourne, Melbourne, Australia
        {\tt\small cantoni@unimelb.edu.au}}%
}
\begin{document}

\maketitle
\thispagestyle{empty}
\pagestyle{empty}

\begin{abstract}
Contraction properties of the Riccati operator are studied within the context of non-stationary linear-quadratic optimal control. A lifting approach is used to obtain a bound on the rate of strict contraction,  with respect to the Riemannian metric, across a sufficient number of iterations. This number of iterations is related to an assumed uniform controllability and observability property of the dynamics and stage-cost in the original formulation of the problem.
\end{abstract}
\begin{keywords}
Discrete-time linear systems, Non-stationary optimal control, Riccati difference equations
\end{keywords}

\section{Introduction}
Consider the following infinite-horizon linear-quadratic (LQ) optimal control problem:
\begin{subequations}\label{eq:opt_problem}
\begin{align}
\min_{u,x} &\sum_{k\in\mathbb{N}_0} x_{k}^{\trs}Q_k x_k + u_k^{\trs}R_k u_k \label{eq:nominal_cost} \\
\intertext{subject to $x_0=\xi$ and}
x_{k+1}&=A_k x_k + B_k u_k, \quad k\in\mathbb{N}_0, \label{eq:system_dynamics}
\end{align}
\end{subequations}
where $Q_k$ is positive semi-definite, $R_k$ is positive definite, $A_k \in \mathbb{R}^{n \times n}$, $B_k \in \mathbb{R}^{n \times m}$, and the initial state $\xi \in \mathbb{R}^n$ is given. The task is to determine the cost minimizing input $u=(u_0,u_1,\ldots)$ and corresponding state sequence $x=(x_0,x_1,\dots)$ over the infinite horizon.

Under assumptions of uniform stabilizability and uniform detectability, it is well-known (e.g., see~\cite{anderson2007optimal,bertsekas2012dynamic}) that the optimal policy for~\eqref{eq:opt_problem} is given by the stabilizing linear time-varying state-feedback controller
\begin{align} \label{eq:optpolicy}
u_k = -(R_k + B_k^{\trs} P_{k+1} B_k)^{-1} B_k^{\trs}P_{k+1}A_k x_k,~ k\in\mathbb{N}_0,
\end{align}
where $P_k$ is the unique positive semi-definite solution of 
\begin{align}\label{eq:recursion}
P_k = \mathcal{R}_k(P_{k+1})
\end{align}
and the Riccati operator is given by
\begin{align}\label{eq:ricc_1}
\mathcal{R}_k(P) := Q_k + A_k^{\trs} \left(P - PB_k(R_k\!+\!B_k^{\trs}PB_k)^{-1}B_k^{\trs}P\right) A_k .
\end{align}
For the infinite-horizon problem, the recursion~\eqref{eq:recursion} does not have a boundary condition. Its unique symmetric positive semi-definite solution is stabilizing and attractive for all symmetric positive semi-definite solutions of~\eqref{eq:recursion} over a finite horizon with suitable boundary conditions~\cite{de1992time}.

The cost associated with the optimal policy (\ref{eq:optpolicy}) is given by 
$\xi^{\trs}P_0\xi$~\cite{anderson2007optimal}. 
By the principle of optimality~\cite{bertsekas2012dynamic}, the least infinite-horizon cost is achieved by the receding finite-horizon control scheme given by the feedback policy
\begin{align*}
    u_k = u_k^*(x_k)
\end{align*}
for $k\in\mathbb{N}_0$,
where
\begin{align}
    &(u_k^*(x_k),\dots,u_{k+T-1}^*(x_k)) \nonumber \\
    &\!= \argmin_{u} \!\! \sum_{l=k}^{k+T-1} \!\!\! x_{l}^{\trs} Q_{l} x_{l} + u_{l}^{\trs} R_{l} u_{l} + 
    x_{k+T}^{\trs} P_{k+T} x_{k+T}
    \label{eq:idealMPC}
    \end{align}
with    
 $x_{l+1} = A_l x_l + B_l u_l$ for $l\in[k:k+T-1]$. 
However, without exact knowledge of the problem data $(A_k,B_k,Q_k,R_k)$ beyond the prediction horizon, one can only approximate the optimal cost-to-go $x_{k+T}^{\trs} P_{k+T} x_{k+T}$ with an alternative terminal penalty. 

In the stationary setting, where $A_k=A$, $B_k=B$, $Q_k=Q$, $R_k=R$, and $\mathcal{R}_k=\mathcal{R}$ in all~$k\in\mathbb{N}_0$,
the optimal cost-to-go matrix $P_{k+T}$ for (\ref{eq:idealMPC}) is the constant stabilizing solution to the corresponding algebraic Riccati equation $P=\mathcal{R}_k(P)=\mathcal{R}(P)$. When it is approximated by any constant positive semi-definite $\tilde{P}$ such that $\mathcal{R}(\tilde{P}) \preceq \tilde{P}$, the resulting control policy remains stabilizing~\cite{bitmead1991riccati}. 
Riccati contraction based analysis of receding horizon schemes appears in~\cite{zhang2021regret,li2022performance}.
In~\cite{li2022performance}, it is shown that the closed-loop performance degradation is bounded in terms of the induced $2$-norm of the approximation error $P-\tilde{P}$.

In the stationary non-linear continuous-time setting of~\cite{jadbabaie2005stability}, it is established that when the approximate terminal penalty is set to zero, there exists a finite prediction horizon that guarantees stability. A similar result appears in~\cite{grune2008infinite}, where a performance bound is also quantified for the zero terminal penalty approximation in a stationary non-linear discrete-time setting. Related time-varying results appear in~\cite{keerthi1988optimal} and~\cite{lin2021perturbation}. In~\cite{keerthi1988optimal}, the performance degradation is analyzed for receding-horizon approximations with a constraint that corresponds to infinite terminal penalty. 
In~\cite{lin2021perturbation}, it is shown that for linear dynamics, a sufficiently long prediction horizon, and a time-invariant terminal penalty, the dynamic regret decays exponentially with respect to the horizon. 


Subsequent consideration of the contraction properties of (\ref{eq:ricc_1}) is motivated by the possibility of using the result of iterating (\ref{eq:recursion}) over a finite horizon for a suitable boundary condition, to approximate $P_{k+T}$ in (\ref{eq:idealMPC}). Riccati contraction informed design of horizon length and terminal penalty, given uncertain problem data, is the topic of ongoing investigation. Initial results are reported in~\cite{sun2023receding}. Here, the main contribution relates to characterizing a bound on the strict contraction rate of a sufficient number of iterations of the Riccati recursion (\ref{eq:recursion}), building upon a foundation result from~\cite{bougerol1993kalman}. The sufficient number of iterations is related to an assumed uniform controllability and observability property of the time-varying dynamics and stage costs. The development involves a lifted reformulation of the problem \eqref{eq:opt_problem}, in which the system model evolves by this fixed number of steps per stage. The fixed number of steps and the corresponding bound on the strict contraction rate are given explicitly in terms of the original problem data.

The paper is organized as follows.
Contraction properties of the Riccati operator with respect to the Riemannian metric are presented in Section~\ref{sec:contraction}. The lifting approach for characterizing the strict contraction rate is developed in Section~\ref{sec:lifting}. A numerical example is presented in Section~\ref{sec:example}. Some concluding remarks are provided in Section~\ref{sec:conclusion}.

\paragraph*{Notation}
$\mathbb{N}$ denotes the set of natural numbers, and $\mathbb{N}_0 = \mathbb{N} \cup \{0\}$. The $n \times n$ identity matrix is denoted by~$I_n$. The $a \times b$ matrix of zeros is denoted by~$0_{a,b}$.
Given the indexed collection of matrices $(M_a, M_{a+1}, \dots, M_{b})$, where $a,b \in \mathbb{N}_0$, and $a < b$, the corresponding block-diagonal matrix is denoted by $\mathop{\oplus}_{j=a}^{b} M_j$. The transpose of the matrix $M$ is denoted by $M^{\trs}$. The induced $2$-norm of $M$ is denoted by~$\|M\|_2$; this corresponds to the maximum singular value. For~$n \in \mathbb{N}$,
the set of $n\times n$ real symmetric matrices is denoted by $\mathbb{S}$, the positive semi-definite matrices by $\mathbb{S}_{+}^{n}\subset\mathbb{S}$, and positive definite matrices by $\mathbb{S}_{++}^{n}\subset\mathbb{S}^{n}_{+}$. The minimum eigenvalue of $M\in\mathbb{S}$ is denoted by~$\lambda_{\min}(M)\in\mathbb{R}$.


\section{Riccati operator contraction properties}\label{sec:contraction}
In this section, a result in~\cite{bougerol1993kalman} is used to establish that $\mathcal{R}_k$ in \eqref{eq:ricc_1} is a contraction with respect to the Riemannian metric on the set of positive definite matrices.

\begin{assumption}\label{asm:a_invertible}
$A_k$ in \eqref{eq:system_dynamics} is non-singular for all $k \in \mathbb{N}_0$.
\end{assumption}

This standing assumption and the following lemma enable access to a foundation result from~\cite{bougerol1993kalman} in subsequent developments.
A proof is given in Appendix~\ref{sec:proof_rewrite_r}.

\begin{lemma}\label{lemma:rewrite_r}
The operator~$\mathcal{R}_k$ in~\eqref{eq:ricc_1} can be written as the linear fractional transformation
\begin{align}\label{eq:ricc_2}
\mathcal{R}_k(P) = (E_k P + F_k)(G_k P + H_k)^{-1} ,
\end{align}
where
\begin{subequations}\label{eq:efgh}
\begin{align}
E_k &= A_k^{\trs} + Q_kA_k^{-1}B_kR_k^{-1}B_k^{\trs} , \label{eq:e_k} \\
F_k &= Q_k A_k^{-1}, \\
G_k &= A_k^{-1} B_k R_k^{-1} B_k^{\trs}, \\
H_k &= A_k^{-1} .
\end{align}
\end{subequations}
\end{lemma}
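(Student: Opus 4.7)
The plan is to verify the identity by a direct algebraic manipulation driven by the matrix inversion (Woodbury) lemma, which rewrites the difference appearing inside the parentheses of \eqref{eq:ricc_1} as an inverse, after which the right-hand side of \eqref{eq:ricc_2} reduces to \eqref{eq:ricc_1} by a factoring argument that uses Assumption~\ref{asm:a_invertible}.

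First I would restrict attention to $P\in\mathbb{S}_{++}^{n}$ and invoke Woodbury on
$P-PB_k(R_k+B_k^{\trs}PB_k)^{-1}B_k^{\trs}P$ to rewrite it as $(P^{-1}+B_kR_k^{-1}B_k^{\trs})^{-1}$, using that $R_k\in\mathbb{S}_{++}^m$. This turns \eqref{eq:ricc_1} into
\begin{align*}
\mathcal{R}_k(P) = Q_k + A_k^{\trs}\bigl(P^{-1}+B_kR_k^{\trs}B_k^{-1}\bigr)^{-1}A_k.
\end{align*}
Next, using Assumption~\ref{asm:a_invertible}, I would factor the denominator of \eqref{eq:ricc_2} as
\begin{align*}
G_kP+H_k \;=\; A_k^{-1}\bigl(B_kR_k^{-1}B_k^{\trs}P + I\bigr) \;=\; A_k^{-1}\bigl(B_kR_k^{-1}B_k^{\trs}+P^{-1}\bigr)P,
\end{align*}
which is nonsingular since $B_kR_k^{-1}B_k^{\trs}P$ has only non-negative eigenvalues. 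In parallel, I would factor the numerator as
\begin{align*}
E_kP+F_k \;=\; A_k^{\trs}P + Q_kA_k^{-1}\bigl(B_kR_k^{-1}B_k^{\trs}+P^{-1}\bigr)P.
\end{align*}

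Assembling $(E_kP+F_k)(G_kP+H_k)^{-1}$ with these factorings, the common factor of $P$ on the right cancels its inverse, $A_k^{-1}A_k$ collapses to the identity, and the split of the numerator yields exactly the two terms $Q_k$ and $A_k^{\trs}(P^{-1}+B_kR_k^{-1}B_k^{\trs})^{-1}A_k$, matching the Woodbury-rewritten form of $\mathcal{R}_k(P)$. This establishes the identity on $\mathbb{S}_{++}^{n}$.

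Finally, to cover arbitrary $P\in\mathbb{S}_+^n$ (where the Woodbury step requires $P^{-1}$ and is not directly applicable), I would extend by continuity: both sides of \eqref{eq:ricc_2} are rational in the entries of $P$, continuous on the open set where $G_kP+H_k$ is invertible, which contains $\mathbb{S}_+^n$ under Assumption~\ref{asm:a_invertible}, and they agree on the dense subset $\mathbb{S}_{++}^{n}$. The only delicate point, and the main thing to check, is that $G_kP+H_k$ remains nonsingular for every $P\in\mathbb{S}_+^n$, which follows from the spectral argument that the eigenvalues of $B_kR_k^{-1}B_k^{\trs}P$ coincide with those of the positive semi-definite matrix $P^{1/2}B_kR_k^{-1}B_k^{\trs}P^{1/2}$; this step is where one must be careful not to misuse invertibility of $P$.
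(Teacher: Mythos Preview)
Your proof is correct and follows essentially the same Woodbury-plus-factoring route as the paper. The one difference worth noting is that the paper applies Woodbury in the form $A_k^{\trs}P(I+B_kR_k^{-1}B_k^{\trs}P)^{-1}A_k$ rather than your $A_k^{\trs}(P^{-1}+B_kR_k^{-1}B_k^{\trs})^{-1}A_k$, which never inverts $P$ and therefore handles all $P\in\mathbb{S}_{+}^{n}$ directly, eliminating the need for your continuity extension; incidentally, your first display has the typo $B_kR_k^{\trs}B_k^{-1}$ in place of $B_kR_k^{-1}B_k^{\trs}$.
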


\begin{definition}
The Riemannian distance between $U,V\in\mathbb{S}^n_{++}$ is given by
\begin{align*}
\delta(U,V) = \left( \sum_{i=1}^{n} \log^{2} \lambda_i \right)^{\frac{1}{2}} ,
\end{align*}
where 
$\lambda_1, \dots, \lambda_n$ are the eigenvalues of $UV^{-1}$.
\end{definition}

Note, $\delta(\cdot,\cdot):\mathbb{S}_{++}^{n}\times\mathbb{S}_{++}^{n}\rightarrow \mathbb{R}$ is a metric~\cite{bougerol1993kalman}.
The following result is taken from~\cite[Theorem~1.7]{bougerol1993kalman}.
\begin{proposition}
\label{prop:contraction}
Consider the operator~$\mathcal{R}_k$ in~\eqref{eq:ricc_2}. 
If the corresponding matrices in \eqref{eq:efgh} are such that $E_k$ is non-singular and $F_kE_k^{\trs}, E_k^{\trs}G_k \in \mathbb{S}_{+}^{n}$, then for any $X, Y \in \mathbb{S}_{++}^{n}$,
\begin{align*}
\delta(\mathcal{R}_k(X), \mathcal{R}_k(Y)) \leq \delta(X,Y) .
\end{align*}
Further, if 
$F_kE_k^{\trs}, E_k^{\trs}G_k \in \mathbb{S}_{++}^{n}$, then for any $X, Y \in \mathbb{S}_{++}^{n}$,
\begin{align}\label{eq:strict_contraction}
\delta(\mathcal{R}_k(X), \mathcal{R}_k(Y)) \leq \rho_k\cdot \delta(X,Y)
\end{align}
with 
$\rho_k = \zeta_k/(\zeta_k + \epsilon_k)<1$,
where
\begin{align}\label{eq:zeta_eps}
\zeta_k = \|(F_kE_k^{\trs})^{-1}\|_2 \quad \text{ and } \quad \epsilon_k = \lambda_{\min}((E_k^{\trs})^{-1} G_k^{\trs}) .
\end{align}
\end{proposition}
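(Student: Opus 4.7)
The plan is to follow Bougerol's structural approach of decomposing the linear-fractional map $\mathcal{R}_k$ into elementary operations on $\mathbb{S}_{++}^n$ whose interaction with the Riemannian metric $\delta$ is tractable. The building blocks are matrix inversion $\iota\colon P\mapsto P^{-1}$, congruence $\gamma_S\colon P\mapsto SPS^\trs$ for invertible $S$, and translation $\tau_C\colon P\mapsto P+C$ for $C\in\mathbb{S}_+^n$. From the spectral definition of $\delta$ it is immediate that $\iota$ and $\gamma_S$ are isometries, while $\tau_C$ is a non-expansion because the eigenvalues of $(X+C)(Y+C)^{-1}$ are closer to $1$ than those of $XY^{-1}$.

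First, I would convert the hypotheses into positive semi-definite assertions. Since $F_kE_k^\trs$ is symmetric PSD and $E_k$ is invertible, $E_k^{-1}F_k = E_k^{-1}(F_kE_k^\trs)E_k^{-\trs}\in\mathbb{S}_+^n$; symmetrically, $G_kE_k^{-1}=E_k^{-\trs}(E_k^\trs G_k)E_k^{-1}\in\mathbb{S}_+^n$. These two PSD matrices are the natural translation quantities. After factoring $E_k$ out of the numerator and manipulating the denominator in~\eqref{eq:ricc_2}, $\mathcal{R}_k$ can be put into the canonical compound form
\[
\mathcal{R}_k(P) \;=\; \tau_{C_1}\circ\gamma_{S}\circ\iota\circ\tau_{C_2}\circ\iota(P),
\]
with $C_1,C_2\in\mathbb{S}_+^n$ derived from $E_k^{-1}F_k$ and $G_kE_k^{-1}$, and $S$ invertible. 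This mirrors the classical representation $\mathcal{R}(P)=Q+A^\trs(P^{-1}+BR^{-1}B^\trs)^{-1}A$ of the standard Riccati operator. Composing three isometries with two non-expansions delivers $\delta(\mathcal{R}_k(X),\mathcal{R}_k(Y))\leq\delta(X,Y)$ at once.

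For the strict-contraction statement, the key quantitative ingredient is a sharpening of the translation bound: when $C\succ 0$,
\[
\delta(X+C,\,Y+C)\;\le\;\frac{\mu}{\mu+\nu}\,\delta(X,Y),
\]
where $\mu$ dominates $\|X\|_2,\|Y\|_2$ and $\nu\le\lambda_{\min}(C)$. This estimate follows from a direct eigenvalue comparison between $(X+C)(Y+C)^{-1}$ and $XY^{-1}$, exploiting the monotonicity and concavity of $\log$. Substituting this bound at the $\tau_{C_2}$ step in the decomposition, and tracking how $\mu$ and $\nu$ pull back through the surrounding isometries, one identifies $\mu$ with $\zeta_k=\|(F_kE_k^\trs)^{-1}\|_2$ and $\nu$ with $\epsilon_k=\lambda_{\min}(E_k^{-\trs}G_k^\trs)$, producing $\rho_k=\zeta_k/(\zeta_k+\epsilon_k)<1$.

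The main obstacle is precisely this quantitative half. The non-expansion part falls out almost for free once the decomposition is in place, but pinning down the constants $\zeta_k$ and $\epsilon_k$ in the explicit contraction factor requires the careful spectral bookkeeping that constitutes the substance of Bougerol's Theorem~1.7, and this is where the formal write-up would need to be executed with the most care.
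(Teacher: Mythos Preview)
The paper does not supply its own proof of this proposition; it is quoted verbatim from Bougerol's Theorem~1.7 and used as a black box. Your sketch is therefore not competing with anything in the paper but is an attempt to reconstruct Bougerol's argument, and at the architectural level it is faithful: the decomposition $\mathcal{R}_k=\tau_{C_1}\circ\gamma_S\circ\iota\circ\tau_{C_2}\circ\iota$ with $C_1,C_2\in\mathbb{S}_+^n$ is precisely the mechanism behind the non-expansion half, and the observation that $\iota$ and $\gamma_S$ are $\delta$-isometries while $\tau_C$ is a non-expansion is the right toolkit.

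The one concrete imprecision is in the strict-contraction half. Your translation estimate $\delta(X+C,Y+C)\le\frac{\mu}{\mu+\nu}\,\delta(X,Y)$ requires an a~priori bound $\mu$ on $\|X\|_2,\|Y\|_2$, but at the $\tau_{C_2}$ step the inputs are $P^{-1}$ for arbitrary $P\in\mathbb{S}_{++}^n$ and hence unbounded, so no uniform $\mu$ is available there. The strict contraction actually bites at the \emph{outer} translation $\tau_{C_1}$: the preceding block $\iota\circ\tau_{C_2}\circ\iota$ sends every $P$ to $(P^{-1}+C_2)^{-1}\preceq C_2^{-1}$ once $C_2\succ0$, which supplies the uniform upper bound; then $\tau_{C_1}$ with $C_1\succ0$ delivers the factor $\rho_k<1$. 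Pulling the resulting $\mu$ and $\nu$ back through the congruence $\gamma_S$ is what produces the specific expressions $\zeta_k=\|(F_kE_k^{\trs})^{-1}\|_2$ and $\epsilon_k=\lambda_{\min}((E_k^{\trs})^{-1}G_k^{\trs})$. You already flag this bookkeeping as the delicate part; the correction is simply to relocate where the quantitative translation lemma is invoked, so that both positive-definiteness hypotheses play their intended roles---$C_2\succ0$ for boundedness, $C_1\succ0$ for the strict shift.
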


Under Assumption~\ref{asm:a_invertible}, since $R_k 
+B_k^{\trs}(A_k^{\trs})^{-1}Q_kA_k^{-1}B_k\in\mathbb{S}^{n}_{++}$, application of the Woodbury matrix identity yields 
\begin{align*}
E_k^{-1}
&= (A_k^{\trs})^{-1} \!-(A_k^{\trs})^{-1}Q_kA_k^{-1}B_k \\ & \qquad\qquad\quad \times(R_k+B_k^{\trs}(A_k^{\trs})^{-1}Q_kA_k^{-1}B_k)^{-1}B_k^{\trs}(A_k^{\trs})^{-1}.
\end{align*}
That is, $E_k$ is non-singular. On the other hand,
\begin{align}\label{eq:fet}
F_kE_k^{\trs} = Q_k + Q_kA_k^{-1}B_kR_k^{-1}B_k^{\trs}(A_k^{\trs})^{-1}Q_k
\end{align}
and
\begin{align}\label{eq:etg}
E_k^{\trs}G_k = B_k(R_k^{-1}+R_k^{-1}B_k^{\trs}(A_k^{\trs})^{-1}Q_kA_k^{-1}B_kR_k^{-1}) 
B_k^{\trs}
\end{align}
are positive semi-definite but not necessarily positive definite. So in view of  Proposition~\ref{prop:contraction} and Lemma~\ref{lemma:rewrite_r}, the operator~$\mathcal{R}_k$ in \eqref{eq:ricc_1} is a contraction, but not necessarily a strict contraction. A sufficient condition for strict contraction follows.

\begin{proposition}\label{prop:sufficient_contraction}
Consider $\mathcal{R}_k$ in~\eqref{eq:ricc_1}. 
If $Q_k\in\mathbb{S}_{++}^n$, and $B_k$ has full row rank, then 
 for any $X, Y \in \mathbb{S}_{++}^{n}$, \eqref{eq:strict_contraction} holds with~${\rho_k = \zeta_k / (\zeta_k + \epsilon_k)}<1$, where
\begin{subequations}\label{eq:new_zeta_eps}
\begin{align}
\zeta_k \!&=\! \|(Q_k + Q_kA_k^{-1}B_kR_k^{-1}B_k^{\trs}(A_k^{\trs})^{-1}Q_k)^{-1}\|_2 , \\
\epsilon_k \!&= \! \lambda_{\min} (A_k^{-1}B_k (R_k\!+\!B_k^{\trs}(A_k^{\trs})\!^{-1}Q_kA_k^{-1}B_k)\!^{-1}\! B_k^{\trs}(A_k^{\trs})\!^{-1} ) . \label{eq:new_eps}
\end{align}
\end{subequations}
\end{proposition}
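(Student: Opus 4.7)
The plan is to specialize Proposition~\ref{prop:contraction} to the data of~\eqref{eq:ricc_1} via the LFT representation furnished by Lemma~\ref{lemma:rewrite_r}. Under Assumption~\ref{asm:a_invertible}, the discussion immediately preceding the statement already establishes that $E_k$ in~\eqref{eq:efgh} is non-singular and records closed-form expressions~\eqref{eq:fet} and~\eqref{eq:etg} for $F_kE_k^{\trs}$ and $E_k^{\trs}G_k$. What remains is to upgrade these two matrices to positive \emph{definiteness} under the stated hypotheses, and then to translate the abstract formulas~\eqref{eq:zeta_eps} for $\zeta_k$ and $\epsilon_k$ back into the original problem data as in~\eqref{eq:new_zeta_eps}.

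The first step is short. Formula~\eqref{eq:fet} exhibits $F_kE_k^{\trs}$ as the sum of $Q_k$ and a positive semi-definite term, so $Q_k \in \mathbb{S}_{++}^n$ immediately forces $F_kE_k^{\trs} \in \mathbb{S}_{++}^n$. Formula~\eqref{eq:etg} exhibits $E_k^{\trs}G_k$ as $B_k M B_k^{\trs}$ with
\[
M := R_k^{-1} + R_k^{-1}B_k^{\trs}(A_k^{\trs})^{-1}Q_kA_k^{-1}B_kR_k^{-1} \in \mathbb{S}_{++}^n ,
\]
so full row rank of $B_k$ gives $E_k^{\trs}G_k \in \mathbb{S}_{++}^n$. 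The strict-contraction part of Proposition~\ref{prop:contraction} then delivers~\eqref{eq:strict_contraction} with $\rho_k = \zeta_k/(\zeta_k+\epsilon_k) < 1$, and with $\zeta_k = \|(F_kE_k^{\trs})^{-1}\|_2$ and $\epsilon_k = \lambda_{\min}((E_k^{\trs})^{-1}G_k^{\trs})$ as specified in~\eqref{eq:zeta_eps}.

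The first equality in~\eqref{eq:new_zeta_eps} is then just~\eqref{eq:fet} read into the definition of $\zeta_k$. For the second, I would exploit symmetry of $E_k^{\trs}G_k$ to write $(E_k^{\trs})^{-1}G_k^{\trs} = G_kE_k^{-1}$, substitute the Woodbury identity for $E_k^{-1}$ already displayed in the excerpt, and let the cross term collapse using the observation that $B_k^{\trs}(A_k^{\trs})^{-1}Q_kA_k^{-1}B_k = M_k - R_k$, where $M_k := R_k + B_k^{\trs}(A_k^{\trs})^{-1}Q_kA_k^{-1}B_k$. The two pieces of $G_kE_k^{-1}$ then combine through the push-through $R_k^{-1}[I-(M_k-R_k)M_k^{-1}] = M_k^{-1}$ to give $G_kE_k^{-1} = A_k^{-1}B_kM_k^{-1}B_k^{\trs}(A_k^{\trs})^{-1}$, matching~\eqref{eq:new_eps} exactly.

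The main obstacle is this last bookkeeping step: arranging the Woodbury rewrite so that the cross term collapses cleanly into the claimed form of $\epsilon_k$. Once that identity is in hand, the proposition reduces to a direct application of Proposition~\ref{prop:contraction} to the data supplied by Lemma~\ref{lemma:rewrite_r}.
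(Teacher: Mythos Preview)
Your proposal is correct and follows essentially the same approach as the paper: verify that $F_kE_k^{\trs}$ and $E_k^{\trs}G_k$ are positive definite under the stated hypotheses, invoke Proposition~\ref{prop:contraction}, and then simplify~\eqref{eq:zeta_eps} into the explicit expressions~\eqref{eq:new_zeta_eps}. The only difference is cosmetic bookkeeping for $\epsilon_k$: the paper expands $(E_k^{\trs})^{-1}G_k^{\trs}$ directly and applies Woodbury to $(I_n + A_k^{-1}B_kR_k^{-1}B_k^{\trs}(A_k^{\trs})^{-1}Q_k)^{-1}$, whereas you use the symmetry of $E_k^{\trs}G_k$ to rewrite the target as $G_kE_k^{-1}$ and recycle the already-displayed Woodbury formula for $E_k^{-1}$---both routes collapse to the same push-through identity $R_k^{-1}[I - (M_k - R_k)M_k^{-1}] = M_k^{-1}$.
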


\begin{proof}
From~\eqref{eq:fet} and~\eqref{eq:etg}, if $Q_k$ is positive definite and $B_k$ has full row rank, then $F_kE_k^{\trs}, E_k^{\trs}G_k \in \mathbb{S}_{++}^{n}$, and the strict contraction properties follow from Proposition~\ref{prop:contraction}. Consider $E_k,F_k,G_k$ in~\eqref{eq:efgh}. Then, \eqref{eq:zeta_eps} leads to~\eqref{eq:new_zeta_eps}. In particular, by application of the Woodbury matrix identity,
\begin{align*}
&(E_k^{\trs})^{-1} G_k^{\trs}\\
&=\! (A_k + B_kR_k^{-1}B_k^{\trs}(A_k^{\trs})^{-1}Q_k)^{-1} B_kR_k^{-1}B_k^{\trs}(A_k^{\trs})^{-1} \\
&=\! (I_n \!+\! A_k^{-1}B_kR_k^{-1}B_k^{\trs}(A_k^{\trs})^{-1}Q_k)^{-1} \! A_k^{-1}B_kR_k^{-1}B_k^{\trs}(A_k^{\trs})^{-1} \\
&=\! (I_n \!-\! A_k^{-1}B_k(R_k\!+\!B_k^{\trs}(A_k^{\trs})\!^{-1}\!Q_kA_k^{-1}B_k)^{-1}\!B_k^{\trs}(A_k^{\trs})\!^{-1}Q_k) \\
&\qquad\qquad\qquad\qquad\qquad\qquad\qquad \times A_k^{-1}B_kR_k^{-1}B_k^{\trs}(A_k^{\trs})^{-1} \\
&=\! A_k^{-1}B_k \Big(I_n \!-\! (R_k\!+\!B_k^{\trs}(A_k^{\trs})^{-1}Q_kA_k^{-1}B_k)^{-1} \\
&\qquad\qquad\qquad\qquad 
\times B_k^{\trs}(A_k^{\trs})^{-1}Q_kA_k^{-1}B_k\Big) R_k^{-1}B_k^{\trs}(A_k^{\trs})^{-1} \\
&=\! A_k^{-1}\!B_k (R_k\!+\!B_k^{\trs}(A_k^{\trs})^{-1}Q_kA_k^{-1}B_k)^{-1}\!R_k R_k^{-1}B_k^{\trs}(A_k^{\trs})^{-1} ,
\end{align*}
which with \eqref{eq:zeta_eps} yields~\eqref{eq:new_eps}.
\end{proof}

\section{Lifting to a strict contraction}\label{sec:lifting}
A lifted reformulation of problem \eqref{eq:opt_problem} is developed below for which the corresponding Riccati operator is strictly contractive. In the lifted representation, each stage of the system model corresponds to multiple steps of~\eqref{eq:system_dynamics}, with a view to satisfying the conditions of Proposition~\ref{prop:sufficient_contraction}. This is achieved under a combined uniform controllability and observability assumption on the original formulation \eqref{eq:opt_problem}. 

Given $d \in \mathbb{N}$, with reference to \eqref{eq:system_dynamics},
define 
the $d$-step lifted model state
\begin{align} \label{eq:lifted_x}
\tilde{x}_t := x_{dt}, 
\end{align}
and input
\begin{align} \label{eq:lifted_u}
\hat{u}_t :=
\begin{bmatrix}
u_{dt}^\trs & u_{dt+1}^\trs & \cdots & u_{d(t+1)-1}^\trs
\end{bmatrix}^\trs
\end{align}
for each $t\in\mathbb{N}_0$. Then,
\begin{align}\label{eq:dynamics_shorthand}
\hat{A}_t
\begin{bmatrix}
x_{dt} \\ \vdots \\ x_{d(t+1)-1} \\ \tilde{x}_{t+1}
\end{bmatrix}
=
\hat{B}_t
\hat{u}_t
+
\begin{bmatrix}
\tilde{x}_t \\ 0_{nd,1}
\end{bmatrix} ,
\end{align}
where
\begin{subequations}  \label{eq:ABhat}
\begin{align}
\hat{A}_t &:= I_{n(d+1)} -
\begin{bmatrix}
0_{n,nd} & 0_{n,n} \\
\mathop{\oplus}_{j=0}^{d-1}A_{dt+j} & 0_{nd,n}
\end{bmatrix} , \label{eq:Ahat}\\
\hat{B}_t &:= \begin{bmatrix}
0_{n,md} \\ \mathop{\oplus}_{j=0}^{d-1}B_{dt+j}
\end{bmatrix} . \label{eq:Bhat}
\end{align}
\end{subequations}
On noting that $\hat{A}_t$ is non-singular for all $t\in\mathbb{N}_0$, the following lemma is a direct consequence of~\eqref{eq:dynamics_shorthand}.

\begin{lemma}
Given input $u$ for the system dynamics \eqref{eq:system_dynamics}, the lifted model state in \eqref{eq:lifted_x} evolves according to 
%
\begin{align}\label{eq:new_dynamics}
\tilde{x}_{t+1}
&= \Phi_t\, \tilde{x}_t + \Gamma_t \hat{u}_t , \quad t\in\mathbb{N}_0,
\end{align}
where the lifted input $\hat{u}$ is as given in \eqref{eq:lifted_u}, and
\begin{subequations}
\begin{align}
\Phi_t &:= \begin{bmatrix}
0_{n,nd} & I_n
\end{bmatrix}
\hat{A}_t^{-1}
\begin{bmatrix}
I_n \\ 0_{nd,n}
\end{bmatrix} , \label{eq:phi_compact} \\
\Gamma_t &:= \begin{bmatrix}
0_{n,nd} & I_n
\end{bmatrix}
\hat{A}_t^{-1} \hat{B}_t \label{eq:gamma_compact},
\end{align}
\end{subequations}
with $\hat{A}_t$ and $\hat{B}_t$ as per \eqref{eq:ABhat}.
\end{lemma}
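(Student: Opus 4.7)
The plan is to treat the lemma as a direct algebraic consequence of the already-posed identity \eqref{eq:dynamics_shorthand}: once $\hat{A}_t$ is shown to be invertible, the claimed recursion is obtained by left-multiplying that identity by $\hat{A}_t^{-1}$ and then extracting the bottom $n$-block.

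First, I would justify the non-singularity of $\hat{A}_t$. Writing $\hat{A}_t = I_{n(d+1)} - N_t$ with
\[
N_t := \begin{bmatrix} 0_{n,nd} & 0_{n,n} \\ \mathop{\oplus}_{j=0}^{d-1} A_{dt+j} & 0_{nd,n} \end{bmatrix},
\]
the only nonzero block of $N_t$ lies strictly below the block-diagonal, so $N_t$ is (block) nilpotent with $N_t^{d+1} = 0$. Hence $\hat{A}_t$ is invertible with $\hat{A}_t^{-1} = \sum_{j=0}^{d} N_t^j$, independently of any hypothesis on the $A_k$ (Assumption~\ref{asm:a_invertible} is not needed here). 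This is the only step that requires any thought; the rest is mechanical.

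Next, left-multiplying \eqref{eq:dynamics_shorthand} by $\hat{A}_t^{-1}$ gives
\[
\begin{bmatrix} x_{dt} \\ \vdots \\ x_{d(t+1)-1} \\ \tilde{x}_{t+1} \end{bmatrix}
= \hat{A}_t^{-1} \hat{B}_t\, \hat{u}_t + \hat{A}_t^{-1} \begin{bmatrix} I_n \\ 0_{nd,n} \end{bmatrix} \tilde{x}_t .
\]
Since $\tilde{x}_{t+1}$ occupies the last $n$ rows of the left-hand side, pre-multiplying both sides by the selector $\begin{bmatrix} 0_{n,nd} & I_n \end{bmatrix}$ isolates it, and reading off the two resulting coefficients recovers exactly the definitions of $\Phi_t$ and $\Gamma_t$ in \eqref{eq:phi_compact}--\eqref{eq:gamma_compact}, yielding \eqref{eq:new_dynamics}.

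Finally, I would remark (only if space permits, since it is not strictly needed) that the identity \eqref{eq:dynamics_shorthand} itself is simply the system of equations $x_{dt+j+1} = A_{dt+j} x_{dt+j} + B_{dt+j} u_{dt+j}$ for $j = 0, \ldots, d-1$ rearranged as $x_{dt+j+1} - A_{dt+j} x_{dt+j} = B_{dt+j} u_{dt+j}$ and stacked, together with the tautology $x_{dt} = \tilde{x}_t$ on top. No obstacle is anticipated; the proof is effectively a two-line verification once the nilpotency observation is in place.
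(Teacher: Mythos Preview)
Your proposal is correct and follows exactly the paper's approach: the paper simply states that, on noting $\hat{A}_t$ is non-singular, the lemma is a direct consequence of \eqref{eq:dynamics_shorthand}, and you have spelled out precisely that two-step argument (invert, then select the last block). Your nilpotency justification for the invertibility of $\hat{A}_t$ is a welcome addition that the paper omits, and your observation that Assumption~\ref{asm:a_invertible} is not needed here is also correct.
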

\begin{remark}
The matrix~$\Gamma_t$ in \eqref{eq:gamma_compact} is the $d$-step controllability matrix for system~\eqref{eq:system_dynamics} in the un-lifted domain.
\end{remark}


For $t\in\mathbb{N}_0$, define $C_t := Q_t^{\frac{1}{2}}$, and given $d\in\mathbb{N}_0$, 
\begin{align} \label{eq:ChatRhat}
\hat{C}_t &:=
\begin{bmatrix}
{\displaystyle \mathop{\oplus}_{j=0}^{d-1}}C_{dt+j} & 0_{nd,n}
\end{bmatrix}~\text{ and }~
\hat{R}_t := \mathop{\oplus}_{j=0}^{d-1}R_{dt+j}.
\end{align}
\begin{lemma}\label{lemma:new_cost}
Given input $u$, the cost in problem \eqref{eq:opt_problem} equals
\begin{align}\label{eq:new_cost}
\sum_{t \in \mathbb{N}_0} \begin{bmatrix} \tilde{x}_t \\ \hat{u}_t \end{bmatrix}^{\trs}
\begin{bmatrix}
\Xi_t^{\trs} \Xi_t & \Xi_t^{\trs}\Delta_t \\
\Delta_t^{\trs}\Xi_t & \hat{R}_t+\Delta_t^{\trs}\Delta_t
\end{bmatrix}
\begin{bmatrix} \tilde{x}_t \\ \hat{u}_t \end{bmatrix},
\end{align}
with $\tilde{x}_t$ as per \eqref{eq:new_dynamics} for the lifted input $\hat{u}$ given in \eqref{eq:lifted_u}, and
\begin{subequations} \label{eq:XiDelta}
\begin{align}
\Xi_t &:= \hat{C}_t \hat{A}_t^{-1} \begin{bmatrix}
I_n \\ 0_{nd,n}
\end{bmatrix} , \label{eq:xi_compact} \\
\Delta_t &:= \hat{C}_t \hat{A}_t^{-1} \hat{B}_t. \label{eq:delta_compact}
\end{align}
\end{subequations}
\end{lemma}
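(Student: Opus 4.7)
The plan is to partition the infinite sum in \eqref{eq:nominal_cost} into consecutive blocks of $d$ stages, indexed by $t\in\mathbb{N}_0$ and corresponding to the original time indices $k\in\{dt,\ldots,d(t+1)-1\}$, and to identify the $t$-th block with the $t$-th summand in \eqref{eq:new_cost}.

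First I would invert \eqref{eq:dynamics_shorthand} to obtain a closed-form expression for the stacked intra-block states. Since $\hat{A}_t$ in \eqref{eq:Ahat} is block unit lower triangular, it is non-singular, and the stacked vector of intra-block states satisfies
$$\begin{bmatrix} x_{dt} \\ \vdots \\ x_{d(t+1)-1} \\ \tilde{x}_{t+1} \end{bmatrix} = \hat{A}_t^{-1} \begin{bmatrix} I_n \\ 0_{nd,n} \end{bmatrix} \tilde{x}_t + \hat{A}_t^{-1} \hat{B}_t \hat{u}_t.$$
Premultiplying by $\hat{C}_t$ and noting the trailing zero block column of $\hat{C}_t$ in \eqref{eq:ChatRhat} delivers, via the definitions \eqref{eq:XiDelta},
$$\hat{C}_t \begin{bmatrix} x_{dt} \\ \vdots \\ x_{d(t+1)-1} \\ \tilde{x}_{t+1} \end{bmatrix} = \Xi_t \tilde{x}_t + \Delta_t \hat{u}_t.$$
Its squared Euclidean norm, on using $C_k^{\trs} C_k = Q_k$ and the block-diagonal structure of the weighting, recovers the intra-block state-cost contribution $\sum_{j=0}^{d-1} x_{dt+j}^{\trs} Q_{dt+j} x_{dt+j}$.

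The intra-block input-cost contribution $\sum_{j=0}^{d-1} u_{dt+j}^{\trs} R_{dt+j} u_{dt+j}$ is directly $\hat{u}_t^{\trs} \hat{R}_t \hat{u}_t$ by \eqref{eq:lifted_u} and \eqref{eq:ChatRhat}. Adding the two and expanding the quadratic $(\Xi_t \tilde{x}_t + \Delta_t \hat{u}_t)^{\trs} (\Xi_t \tilde{x}_t + \Delta_t \hat{u}_t) + \hat{u}_t^{\trs} \hat{R}_t \hat{u}_t$ rearranges into the block quadratic form appearing in \eqref{eq:new_cost}. Summing over $t\in\mathbb{N}_0$ then reassembles the total cost in \eqref{eq:nominal_cost}.

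The main obstacle is purely bookkeeping with block partitions: verifying that $\hat{A}_t^{-1}$ acts as stated on the right-hand side of \eqref{eq:dynamics_shorthand}, and that the trailing zero column of $\hat{C}_t$ makes the $(d+1)$-th block row of the stacked state invisible to the cost, so that the identification of $\Xi_t$ and $\Delta_t$ in \eqref{eq:XiDelta} indeed matches the original quadratic weights. No deeper analytic difficulty is anticipated.
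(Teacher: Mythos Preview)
Your proposal is correct and follows essentially the same route as the paper: define the stacked intra-block state vector via \eqref{eq:dynamics_shorthand}, premultiply by $\hat{C}_t$ to obtain $\Xi_t\tilde{x}_t+\Delta_t\hat{u}_t$, take its squared norm to recover the block state cost, add $\hat{u}_t^{\trs}\hat{R}_t\hat{u}_t$, expand, and sum over $t$. The paper names the intermediate quantity $w_t:=\hat{C}_t[x_{dt};\ldots;\tilde{x}_{t+1}]$ but the argument is otherwise identical.
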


The proof of Lemma~\ref{lemma:new_cost} is deferred to Appendix~\ref{sec:proof_new_cost}.


\begin{remark}
The matrix~$\Xi_t$ in \eqref{eq:xi_compact} is the $d$-step observability matrix for system~\eqref{eq:system_dynamics} in the un-lifted domain.
\end{remark}



Cross-terms appear in the expression~\eqref{eq:new_cost} of the cost in the lifted domain. This is incompatible with the formulation of Proposition~\ref{prop:sufficient_contraction}.
An LDU decomposition and corresponding lifted domain change of variable 
\begin{align}\label{eq:tilde_v}
\tilde{u}_t := (\hat{R}_t + \Delta_t^{\trs}\Delta_t)^{-1}\Delta_t^{\trs}\Xi_t\tilde{x}_t + \hat{u}_t, \quad t\in\mathbb{N}_0,
\end{align}
leads to the following reformulation of problem \eqref{eq:opt_problem} in the required form.

\begin{lemma}\label{lemma:equivalence}
Problem \eqref{eq:opt_problem} is equivalent to the lifted problem
\begin{subequations} \label{eq:lifted_prob}
\begin{align}
\min_{\tilde{x},\tilde{u}} &\sum_{t\in\mathbb{N}_0} \tilde{x}_{t}^{\trs} \tilde{Q}_t \tilde{x}_t + \tilde{u}_t^{\trs} \tilde{R}_t \tilde{u}_t \label{eq:final_cost} \\
\intertext{subject to $\tilde{x}_0 = \xi$ and}
\tilde{x}_{t+1} &= \tilde{A}_t \tilde{x}_t + \tilde{B}_t \tilde{u}_t , \quad t\in\mathbb{N}_0,\label{eq:final_dynamics} 
\end{align}
\end{subequations}
where
\begin{align}
&\tilde{Q}_t := \Xi_t^{\trs}\Xi_t - \Xi_t^{\trs}\Delta_t\tilde{R}_t^{-1}\Delta_t^{\trs}\Xi_t , \label{eq:tilde_q} \\
&\tilde{R}_t := \hat{R}_t + \Delta_t^{\trs}\Delta_t , \label{eq:tilde_r} \\
&\tilde{A}_t := \Phi_t - \Gamma_t \tilde{R}_t^{-1} \Delta_t^{\trs} \Xi_t , \label{eq:tilde_a} \\
&\tilde{B}_t := \Gamma_t . \label{eq:tilde_b}
\end{align}
\end{lemma}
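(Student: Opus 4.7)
The plan is to combine Lemma~\ref{lemma:new_cost} with a block LDU (Schur complement) decomposition of the stage-cost block matrix in~\eqref{eq:new_cost}, and then substitute the resulting change of variable~\eqref{eq:tilde_v} into the dynamics~\eqref{eq:new_dynamics}. Lemma~\ref{lemma:new_cost} already rewrites the infinite-horizon cost~\eqref{eq:nominal_cost} in the form~\eqref{eq:new_cost}, and~\eqref{eq:new_dynamics} gives the evolution of the lifted state $\tilde{x}_t$ in terms of $\hat{u}_t$, so it remains to (i) decouple the cross terms $\Xi_t^{\trs}\Delta_t$ in the stage quadratic, (ii) re-express the dynamics in terms of $\tilde{u}_t$, and (iii) argue that the reparametrization is a bijection.

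For step (i), I would verify the factorization
\[
\begin{bmatrix}\Xi_t^{\trs}\Xi_t & \Xi_t^{\trs}\Delta_t \\ \Delta_t^{\trs}\Xi_t & \hat{R}_t+\Delta_t^{\trs}\Delta_t\end{bmatrix}
= L_t^{\trs}\begin{bmatrix}\tilde{Q}_t & 0 \\ 0 & \tilde{R}_t\end{bmatrix}L_t,
\qquad
L_t := \begin{bmatrix} I_n & 0 \\ \tilde{R}_t^{-1}\Delta_t^{\trs}\Xi_t & I_{md}\end{bmatrix},
\]
with $\tilde{Q}_t$, $\tilde{R}_t$ as in~\eqref{eq:tilde_q}-\eqref{eq:tilde_r}. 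This is the standard Schur complement identity; invertibility of $\tilde{R}_t$ follows from $\hat{R}_t\succ 0$, and the matching of the $(1,1)$-block is precisely the definition of $\tilde Q_t$. The lower block of $L_t[\tilde{x}_t^{\trs}\ \hat{u}_t^{\trs}]^{\trs}$ equals the variable $\tilde{u}_t$ defined in~\eqref{eq:tilde_v}, so substituting the factorization into~\eqref{eq:new_cost} collapses the stage cost to $\tilde{x}_t^{\trs}\tilde{Q}_t\tilde{x}_t + \tilde{u}_t^{\trs}\tilde{R}_t\tilde{u}_t$, giving~\eqref{eq:final_cost}. For step (ii), inverting~\eqref{eq:tilde_v} as $\hat{u}_t = \tilde{u}_t - \tilde{R}_t^{-1}\Delta_t^{\trs}\Xi_t\tilde{x}_t$ and substituting into~\eqref{eq:new_dynamics} yields $\tilde{x}_{t+1} = (\Phi_t - \Gamma_t\tilde{R}_t^{-1}\Delta_t^{\trs}\Xi_t)\tilde{x}_t + \Gamma_t\tilde{u}_t$, matching~\eqref{eq:final_dynamics} via~\eqref{eq:tilde_a}-\eqref{eq:tilde_b}; the initial condition $\tilde{x}_0 = x_0 = \xi$ is immediate from~\eqref{eq:lifted_x}.

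For step (iii), I would observe that the block reindexing $(u_k)_{k\in\mathbb{N}_0}\leftrightarrow(\hat{u}_t)_{t\in\mathbb{N}_0}$ in~\eqref{eq:lifted_u} is a bijection, and that for each $t$, given $\tilde{x}_t$, the affine map $\hat{u}_t\mapsto\tilde{u}_t$ in~\eqref{eq:tilde_v} is invertible; a straightforward induction on $t$ then shows that feasible trajectories of the two problems are in one-to-one correspondence and achieve identical costs stage by stage, so the infima and minimizers coincide. The main obstacle is essentially bookkeeping in step (i): confirming that the Schur complement that appears is precisely the $\tilde{Q}_t$ of~\eqref{eq:tilde_q} and that the triangular factor $L_t$ realizes exactly the variable change~\eqref{eq:tilde_v}; beyond this, the argument is routine algebra.
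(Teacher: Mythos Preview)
Your proposal is correct and follows essentially the same approach as the paper: an LDU/Schur-complement factorization of the stage-cost block matrix in~\eqref{eq:new_cost}, the change of variable~\eqref{eq:tilde_v}, substitution into the lifted dynamics~\eqref{eq:new_dynamics}, and an observation that $\hat{u}\leftrightarrow\tilde{u}$ is a bijection. If anything, your bookkeeping is slightly more careful (e.g., you correctly size the lower-right identity block of $L_t$ as $I_{md}$ and explicitly note $\tilde{R}_t\succ 0$ via $\hat{R}_t\succ 0$), but the argument is the same.
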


\begin{proof}
The equivalence follows by noting that 
\begin{align}\label{eq:ldu_decomp}
\begin{bmatrix}
\Xi_t^{\trs} \Xi_t & \Xi_t^{\trs}\Delta_t \\
\Delta_t^{\trs}\Xi_t & \tilde{R}_t
\end{bmatrix}
=
L^{\trs}
\begin{bmatrix}
\tilde{Q}_t & 0 \\
0 & \tilde{R}_t
\end{bmatrix}
L ,
\end{align}
where
\begin{align*}
&L := \begin{bmatrix} I_n & 0 \\ \tilde{R}_t^{-1}\Delta_t^{\trs}\Xi_t & I_n \end{bmatrix} .
\end{align*}
With the correspondingly transformed input defined in \eqref{eq:tilde_v}, the cost \eqref{eq:new_cost} becomes the cost in \eqref{eq:lifted_prob}, and the lifted state evolves according to
\begin{align*}
\tilde{x}_{t+1} = \Phi_t \tilde{x}_t + \Gamma_t \left[\tilde{u}_t - (\hat{R}_t + \Delta_t^{\trs}\Delta_t)^{-1}\Delta_t^{\trs}\Xi_t\tilde{x}_t \right] ,
\end{align*}
which is~\eqref{eq:final_dynamics}. As such, $\tilde{x}$ is defined given either $\hat{u}$ or $\tilde{u}$, and either can be constructed from the other using \eqref{eq:tilde_v}.
\end{proof}


\begin{assumption}
\label{asm:uni_ctr_obs}
For all $t \in \mathbb{N}_0$, the $d$-step controllability matrix $\Gamma_t$ in \eqref{eq:gamma_compact} has full row rank, and the $d$-step observability matrix~$\Xi_t$ in \eqref{eq:xi_compact} has full column rank.
\end{assumption}

\begin{lemma}\label{lemma:q_tilde_pd}
With $d \in \mathbb{N}$ such that Assumption~\ref{asm:uni_ctr_obs} holds, the matrix~$\tilde{Q}_t$ in~\eqref{eq:tilde_q} is positive definite for all $t \in \mathbb{N}_0$.
\end{lemma}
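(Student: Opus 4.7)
The plan is to interpret $\tilde{Q}_t$ as a Schur complement and then, via the congruence \eqref{eq:ldu_decomp}, reduce positive definiteness of $\tilde{Q}_t$ to positive definiteness of the full block matrix on the left of \eqref{eq:ldu_decomp}. The latter follows by a direct expansion of its quadratic form, whereupon the observability hypothesis on $\Xi_t$ and positivity of $\hat{R}_t$ finish the job.

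First I would verify that the inverse in \eqref{eq:tilde_q} is well defined: each $R_k$ is positive definite by the standing assumption on \eqref{eq:opt_problem}, so the block-diagonal matrix $\hat{R}_t$ in \eqref{eq:ChatRhat} is positive definite, and hence so is $\tilde{R}_t = \hat{R}_t + \Delta_t^{\trs}\Delta_t$. Next, I would expand the quadratic form on the left of \eqref{eq:ldu_decomp} at an arbitrary $v = [x^{\trs}\; u^{\trs}]^{\trs}$, using $\tilde{R}_t = \hat{R}_t + \Delta_t^{\trs}\Delta_t$ to obtain
\begin{align*}
v^{\trs} \begin{bmatrix} \Xi_t^{\trs}\Xi_t & \Xi_t^{\trs}\Delta_t \\ \Delta_t^{\trs}\Xi_t & \tilde{R}_t \end{bmatrix} v \;=\; \|\Xi_t x + \Delta_t u\|_2^2 + u^{\trs}\hat{R}_t u .
\end{align*}
This is a sum of non-negative terms. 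It vanishes only when $u^{\trs}\hat{R}_t u = 0$, which by $\hat{R}_t \in \mathbb{S}_{++}^{md}$ forces $u = 0$, and $\Xi_t x = 0$, which by the full column rank of $\Xi_t$ supplied by Assumption~\ref{asm:uni_ctr_obs} forces $x = 0$. Hence the block matrix on the left of \eqref{eq:ldu_decomp} lies in $\mathbb{S}_{++}^{n+md}$.

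Finally, the identity \eqref{eq:ldu_decomp} writes this same matrix as $L^{\trs}\,\mathrm{diag}(\tilde{Q}_t,\tilde{R}_t)\,L$ where $L$ is block unit-triangular, hence non-singular. Congruence then gives $\mathrm{diag}(\tilde{Q}_t,\tilde{R}_t) \in \mathbb{S}_{++}^{n+md}$, and in particular $\tilde{Q}_t \in \mathbb{S}_{++}^{n}$. The only non-routine step is the second one: connecting the observability rank hypothesis on $\Xi_t$ (together with $\hat{R}_t \succ 0$) to strict positive definiteness of the lifted cost matrix. Once this is established, the LDU structure in \eqref{eq:ldu_decomp} delivers the Schur complement conclusion with no further computation.
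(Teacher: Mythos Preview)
Your argument is correct, but it proceeds along a different line than the paper. The paper applies the Woodbury identity directly to the definition \eqref{eq:tilde_q} to obtain
\[
\tilde{Q}_t = \Xi_t^{\trs}\bigl(I_{nd} + \Delta_t \hat{R}_t^{-1}\Delta_t^{\trs}\bigr)^{-1}\Xi_t,
\]
from which positive definiteness follows immediately since the middle factor is in $\mathbb{S}_{++}^{nd}$ and $\Xi_t$ has full column rank. You instead recognize $\tilde{Q}_t$ as the Schur complement in the lifted cost matrix, prove that matrix is positive definite by the clean expansion $\|\Xi_t x + \Delta_t u\|_2^2 + u^{\trs}\hat{R}_t u$, and then invoke the congruence \eqref{eq:ldu_decomp} already set up in Lemma~\ref{lemma:equivalence}. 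Your route has the virtue of recycling existing structure and avoids the Woodbury step; the paper's route is self-contained and produces an explicit factored form of $\tilde{Q}_t$ that could be convenient for later eigenvalue or norm estimates. Both rest on exactly the same two ingredients, $\hat{R}_t\succ 0$ and full column rank of $\Xi_t$.
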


\begin{proof}
First observe that application of the Woodbury matrix identity gives
\begin{align}
\tilde{Q}_t 
&= \Xi_t^{\trs} \left(I_{nd} - \Delta_t (\hat{R}_t + \Delta_t^{\trs}\Delta_t)^{-1}\Delta_t^{\trs} \right) \Xi_t \nonumber \\
&= \Xi_t^{\trs} (I_{nd} + \Delta_t \hat{R}_t^{-1} \Delta_t^{\trs})^{-1} \Xi_t. \label{eq:proof_pd_step}
\end{align}
Then note that $(I_{nd} + \Delta_t \hat{R}_t^{-1} \Delta_t^{\trs})^{-1}\in\mathbb{S}_{++}^{n}$. Under Assumption~\ref{asm:uni_ctr_obs}, $\Xi_t$ has full column rank, and thus,  $\tilde{Q}_t\in\mathbb{S}_{++}^n$  in view of \eqref{eq:proof_pd_step}.
\end{proof}

\begin{lemma}\label{lemma:tilde_a_nonsingular}
With $d \in \mathbb{N}$ such that Assumption~\ref{asm:uni_ctr_obs} holds, the state matrix~$\tilde{A}_t$ in \eqref{eq:final_dynamics} is non-singular for all $t \in \mathbb{N}_0$.
\end{lemma}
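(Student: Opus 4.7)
The plan is to establish non-singularity of $\tilde{A}_t$ via a Schur-complement reduction to a manifestly block-triangular matrix. First I would observe that $\tilde{A}_t$ is the Schur complement of the $(2,2)$-block in
\[
M := \begin{bmatrix} \Phi_t & \Gamma_t \\ \Delta_t^{\trs} \Xi_t & \tilde{R}_t \end{bmatrix},
\]
so that $\det M = \det\tilde{R}_t \cdot \det\tilde{A}_t$. Since $\tilde{R}_t = \hat{R}_t + \Delta_t^{\trs}\Delta_t \succ 0$, it will suffice to show $\det M \neq 0$.

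Assumption~\ref{asm:a_invertible} makes $\Phi_t = A_{d(t+1)-1}\cdots A_{dt}$ non-singular, so I would then take the $(1,1)$-Schur complement in $M$ to rewrite $\det M = \det\Phi_t \cdot \det\bigl(\hat{R}_t + \Delta_t^{\trs}(\Delta_t - \Xi_t \Phi_t^{-1}\Gamma_t)\bigr)$. The task reduces to showing that the second factor is non-zero.

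The key step is a telescoping computation using the explicit block-row/column structure of $\Xi_t,\Phi_t,\Gamma_t,\Delta_t$ derived from $\hat{A}_t^{-1}$. This should reveal that the $(i,j)$-block of $\Xi_t \Phi_t^{-1}\Gamma_t$ equals the impulse-response block $C_{dt+i}A_{dt+i-1}\cdots A_{dt+j+1}B_{dt+j}$ for $i > j$, coinciding with the $(i,j)$-block of $\Delta_t$; while for $i \leq j$ it equals the \emph{anti-causal} expression $C_{dt+i}A_{dt+i}^{-1}\cdots A_{dt+j}^{-1}B_{dt+j}$, available precisely because of Assumption~\ref{asm:a_invertible}. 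Hence $\Delta_t - \Xi_t \Phi_t^{-1}\Gamma_t$ is block upper triangular. Since $\Delta_t^{\trs}$ is strictly block upper triangular, the product $\Delta_t^{\trs}(\Delta_t - \Xi_t \Phi_t^{-1}\Gamma_t)$ is also strictly block upper triangular, so adding the block-diagonal $\hat{R}_t$ produces a block upper triangular matrix with diagonal blocks $R_{dt+j} \succ 0$; the block-triangular determinant formula then gives a non-zero second factor, completing the argument.

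The main obstacle is the index bookkeeping in the telescoping step that exhibits the block-triangular structure of $\Delta_t - \Xi_t \Phi_t^{-1}\Gamma_t$; after that, non-singularity is immediate from block-triangular factorization. I note that this plan uses only Assumption~\ref{asm:a_invertible}, so Assumption~\ref{asm:uni_ctr_obs}, though appearing in the hypothesis of Lemma~\ref{lemma:tilde_a_nonsingular}, is not actually required for the non-singularity conclusion itself.
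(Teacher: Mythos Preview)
Your proposal is correct and follows essentially the same route as the paper: reduce via two Schur complements in $M$ to the matrix $\hat{R}_t + \Delta_t^{\trs}\Delta_t - \Delta_t^{\trs}\Xi_t\Phi_t^{-1}\Gamma_t$, then show it is block upper triangular with diagonal blocks $R_{dt+j}\succ 0$. The paper carries out the triangular-structure step by explicit matrix algebra with auxiliary objects $Y,X,Z,U$ built from $\hat{A}_t^{-1}$, whereas you phrase the same cancellation in impulse-response terms (causal blocks of $\Xi_t\Phi_t^{-1}\Gamma_t$ match those of $\Delta_t$, anti-causal ones survive); your observation that only Assumption~\ref{asm:a_invertible} is used is also consistent with the paper's argument.
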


The proof of Lemma~\ref{lemma:tilde_a_nonsingular} is deferred to Appendix~\ref{sec:tilde_a_nonsingular}.

\begin{theorem}
\label{theorem:contraction_lifted}
With $d \in \mathbb{N}$ such that Assumption~\ref{asm:uni_ctr_obs} holds, for $P \in \mathbb{S}_{++}^{n}$ and $t\in\mathbb{N}_0$, define the Riccati operator
\begin{align}\label{eq:operator_lifted}
\tilde{\mathcal{R}}_t(P) := \tilde{Q}_t + \tilde{A}_t^{\trs} (P - P\tilde{B}_t(\tilde{R}_t+\tilde{B}_t^{\trs}P\tilde{B}_t)^{-1}\tilde{B}_t^{\trs}P) \tilde{A}_t,
\end{align}
 with $\tilde{Q}_t, \tilde{R}_t, \tilde{A}_t, \tilde{B}_t$ as per~\eqref{eq:tilde_q}, \eqref{eq:tilde_r}, \eqref{eq:tilde_a},~\eqref{eq:tilde_b}, respectively. Then, 
for any $X,Y \in \mathbb{S}_{++}^{n}$,
\begin{align}
\delta(\tilde{\mathcal{R}}_t(X), \tilde{\mathcal{R}}_t(Y)) \leq \tilde{\rho}_t \cdot \delta(X,Y),
\end{align}
with $\tilde{\rho}_t = \tilde{\zeta}_t/(\tilde{\zeta}_t + \tilde{\epsilon}_t)< 1$,
where
\begin{subequations}
\begin{align*}
\tilde{\zeta}_t &= \|(\tilde{Q}_t + \tilde{Q}_t\tilde{A}_t^{-1}\tilde{B}_t\tilde{R}_t^{-1}\tilde{B}_t^{\trs}(\tilde{A}_t^{\trs})^{-1}\tilde{Q}_t)^{-1}\|_2 , \\
\tilde{\epsilon}_t &= \lambda_{\min} (\tilde{A}_t^{-1}\tilde{B}_t (\tilde{R}_t\!+\!\tilde{B}_t^{\trs}(\tilde{A}_t^{\trs})\!^{-1}\tilde{Q}_t\tilde{A}_t^{-1}\tilde{B}_t)\!^{-1}\! \tilde{B}_t^{\trs}(\tilde{A}_t^{\trs})\!^{-1} ) .
\end{align*}
\end{subequations}
\end{theorem}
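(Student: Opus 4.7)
The plan is to apply Proposition~\ref{prop:sufficient_contraction} to the lifted-domain data $(\tilde{A}_t, \tilde{B}_t, \tilde{Q}_t, \tilde{R}_t)$, since the operator~\eqref{eq:operator_lifted} is exactly~\eqref{eq:ricc_1} under this substitution, and the expressions for $\tilde{\zeta}_t, \tilde{\epsilon}_t$ in the statement match~\eqref{eq:new_zeta_eps} under the same substitution. The only work, therefore, is to check that the hypotheses of Proposition~\ref{prop:sufficient_contraction}, together with the standing invertibility needed via Assumption~\ref{asm:a_invertible}, transfer to the lifted data.

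First I would invoke Lemma~\ref{lemma:tilde_a_nonsingular} to obtain non-singularity of $\tilde{A}_t$, which plays the role of Assumption~\ref{asm:a_invertible} in the lifted domain and ensures that the inverses appearing in the expressions for $\tilde{\zeta}_t$ and $\tilde{\epsilon}_t$ are well-defined. Next, Lemma~\ref{lemma:q_tilde_pd} gives $\tilde{Q}_t \in \mathbb{S}_{++}^{n}$. From~\eqref{eq:tilde_b}, $\tilde{B}_t = \Gamma_t$, which has full row rank by Assumption~\ref{asm:uni_ctr_obs}. Finally, from~\eqref{eq:tilde_r}, $\tilde{R}_t = \hat{R}_t + \Delta_t^{\trs}\Delta_t \in \mathbb{S}_{++}^{n}$, since $\hat{R}_t$ is block-diagonal with each block $R_{dt+j}$ positive definite by the standing hypotheses on~\eqref{eq:opt_problem}, and $\Delta_t^{\trs}\Delta_t$ is positive semi-definite.

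With those four facts in hand, Proposition~\ref{prop:sufficient_contraction} applies verbatim after the substitution $(A_k, B_k, Q_k, R_k) \leftarrow (\tilde{A}_t, \tilde{B}_t, \tilde{Q}_t, \tilde{R}_t)$, delivering the strict contraction inequality with rate $\tilde{\rho}_t = \tilde{\zeta}_t/(\tilde{\zeta}_t + \tilde{\epsilon}_t) < 1$ and the explicit formulae stated in the theorem. Since every hypothesis is either the content of a preceding lemma or a direct consequence of the definitions and Assumption~\ref{asm:uni_ctr_obs}, no genuinely new obstacle arises at this theorem; the substantive effort is front-loaded into Lemmas~\ref{lemma:q_tilde_pd} and~\ref{lemma:tilde_a_nonsingular}, with Lemma~\ref{lemma:tilde_a_nonsingular} being the more delicate of the two because $\tilde{A}_t = \Phi_t - \Gamma_t \tilde{R}_t^{-1}\Delta_t^{\trs}\Xi_t$ mixes the lifted dynamics and observability data rather than being a simple product of the original $A_k$'s.
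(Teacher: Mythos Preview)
Your proposal is correct and matches the paper's own proof essentially line for line: invoke Lemma~\ref{lemma:tilde_a_nonsingular} for invertibility of $\tilde{A}_t$, Lemma~\ref{lemma:q_tilde_pd} for $\tilde{Q}_t\in\mathbb{S}_{++}^n$, Assumption~\ref{asm:uni_ctr_obs} for full row rank of $\tilde{B}_t=\Gamma_t$, and then apply Proposition~\ref{prop:sufficient_contraction}. Your explicit check that $\tilde{R}_t\in\mathbb{S}_{++}^{md}$ is a small addition the paper leaves implicit, but otherwise the arguments coincide.
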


\begin{proof}
Under Assumption~\ref{asm:uni_ctr_obs}, $\tilde{B}_t = \Gamma_t$ has full row rank for all $t \in \mathbb{N}_0$. From Lemma~\ref{lemma:q_tilde_pd} and Lemma~\ref{lemma:tilde_a_nonsingular}, $\tilde{Q}_t$ is positive definite and $\tilde{A}_t$ is invertible for all $t \in \mathbb{N}_0$ in line with Assumption~\ref{asm:a_invertible}. As such, the strict contraction property follows from Proposition~\ref{prop:sufficient_contraction}.
\end{proof}

The lifted Riccati operator $\tilde{\mathcal{R}}_t$ in \eqref{eq:operator_lifted} corresponds to composing the original $\mathcal{R}_k$ in \eqref{eq:ricc_1} according to \eqref{eq:recursion}. 

\begin{proposition}\label{prop:composition}
Given $d \in \mathbb{N}$, for all~$P \in \mathbb{S}_{++}^{n}$ and~$t \in \mathbb{N}_0$,
\begin{align}\label{eq:composition}
\tilde{\mathcal{R}}_t(P) = \mathcal{R}_{dt}\circ\mathcal{R}_{dt+1} \circ \cdots \circ\mathcal{R}_{d(t+1)-1}(P).
\end{align}
\end{proposition}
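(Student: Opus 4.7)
The plan is to recognize both sides of~\eqref{eq:composition} as the matrix underlying the quadratic optimal cost-to-go for a common $d$-step finite-horizon LQ problem with terminal penalty $P$, and then invoke uniqueness of the symmetric matrix representing a quadratic form. Concretely, for any $z \in \mathbb{R}^n$ I would consider
\begin{align*}
V(z) := \min \sum_{k=dt}^{d(t+1)-1} \!\! \left( x_k^{\trs} Q_k x_k + u_k^{\trs} R_k u_k \right) + x_{d(t+1)}^{\trs} P x_{d(t+1)}
\end{align*}
subject to $x_{dt} = z$ and $x_{k+1} = A_k x_k + B_k u_k$, and establish that $V(z) = z^{\trs} M z = z^{\trs} \tilde{\mathcal{R}}_t(P) z$, where $M$ denotes the right-hand side of~\eqref{eq:composition}.

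The first identification is standard dynamic programming: iterating the backward Bellman--Riccati recursion \eqref{eq:ricc_1} starting from boundary condition $P$ for $d$ steps yields the value function at stage $dt$ as the quadratic form with matrix $M = \mathcal{R}_{dt} \circ \cdots \circ \mathcal{R}_{d(t+1)-1}(P)$. For the second identification, I would apply Lemma~\ref{lemma:new_cost} restricted to the single lifted stage $t$ (with $\tilde{x}_t = z$) to rewrite the $d$-step stage cost in terms of $(z,\hat{u}_t)$ via~\eqref{eq:new_cost}, and note that the terminal penalty $x_{d(t+1)}^{\trs} P x_{d(t+1)}$ equals $\tilde{x}_{t+1}^{\trs} P \tilde{x}_{t+1}$ by~\eqref{eq:lifted_x}, with $\tilde{x}_{t+1}$ given by the lifted dynamics~\eqref{eq:new_dynamics}. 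Exactly as in the proof of Lemma~\ref{lemma:equivalence}, the LDU decomposition~\eqref{eq:ldu_decomp} together with the bijective change of variable~\eqref{eq:tilde_v} recasts the problem as minimizing $z^{\trs} \tilde{Q}_t z + \tilde{u}_t^{\trs} \tilde{R}_t \tilde{u}_t + \tilde{x}_{t+1}^{\trs} P \tilde{x}_{t+1}$ over $\tilde{u}_t$, with $\tilde{x}_{t+1} = \tilde{A}_t z + \tilde{B}_t \tilde{u}_t$. A single Bellman--Riccati step in this lifted one-stage problem then returns $V(z) = z^{\trs} \tilde{\mathcal{R}}_t(P) z$. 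Since $z^{\trs} M z = z^{\trs} \tilde{\mathcal{R}}_t(P) z$ for every $z \in \mathbb{R}^n$ and both matrices are symmetric, \eqref{eq:composition} follows.

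I do not anticipate a serious obstacle. The only technical point to verify is that each intermediate matrix $\mathcal{R}_{dt+j+1} \circ \cdots \circ \mathcal{R}_{d(t+1)-1}(P)$ lies in $\mathbb{S}_{+}^{n}$, so that the subsequent application of $\mathcal{R}_{dt+j}$ and the corresponding one-step DP recursion are well-defined. This is immediate from \eqref{eq:ricc_1}: $Q_k \in \mathbb{S}_{+}^{n}$, and the Schur-complement-type term $P - PB_k(R_k + B_k^{\trs}PB_k)^{-1}B_k^{\trs}P$ is in $\mathbb{S}_{+}^{n}$ whenever $P \in \mathbb{S}_{+}^{n}$ and $R_k \in \mathbb{S}_{++}^{m}$, so $\mathcal{R}_k$ maps $\mathbb{S}_{+}^{n}$ into itself, and the backward recursion proceeds without issue from the initial $P \in \mathbb{S}_{++}^n$.
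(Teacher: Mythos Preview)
Your proposal is correct and follows essentially the same route as the paper's own proof: both set up the $d$-step finite-horizon LQ problem with terminal penalty $P$, identify its optimal value with the composed Riccati map via standard dynamic programming, and separately identify it with $\tilde{\mathcal{R}}_t(P)$ by invoking the single-stage version of the lifting in Lemma~\ref{lemma:equivalence}, concluding by the arbitrariness of the initial state. Your write-up is in fact more explicit than the paper's, which simply cites \cite{anderson2007optimal,bertsekas2012dynamic} for the first identification and gestures at Lemma~\ref{lemma:equivalence} for the second; your additional check that $\mathcal{R}_k$ preserves $\mathbb{S}_+^n$ is a useful detail the paper omits.
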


The proof is deferred to Appendix~\ref{sec:proof_composition}.

\section{Example}\label{sec:example}
A numerical example is presented to illustrate the strict contraction properties of the Riccati operator. Consider the following instance of the time-varying LQ control problem~\eqref{eq:opt_problem}:
For $k \in \mathbb{N}_0$,
\begin{align*}
Q_k &= \begin{bmatrix}10&4\\4&7\end{bmatrix} + \alpha^k \sin(\omega k) \begin{bmatrix}2&1\\1&3\end{bmatrix} , \\
R_k &= 5 + 4 \alpha^k \sin(\omega k), \\
A_k &= \begin{bmatrix}5&3\\2&1\end{bmatrix} + \alpha^k \sin(\omega k) \begin{bmatrix}10&20\\30&10\end{bmatrix} , \\
B_k &= \begin{bmatrix}2\\3\end{bmatrix} + \alpha^k \sin(\omega k) \begin{bmatrix}10\\20\end{bmatrix} ,
\end{align*}
where $\alpha = 0.9$, and $\omega = 1$.

The time-varying dynamics are uniformly $d$-step controllable and observable in the sense of Assumption~\ref{asm:uni_ctr_obs} for $d = 2$. Consider the corresponding Riccati recursions \begin{align*}
    X_k = \mathcal{R}_k(X_{k+1}) \quad \text{and} \quad
    Y_k = \mathcal{R}_k(Y_{k+1})
\end{align*}
for $k = T-1, T-2, \ldots, 0$, with boundary conditions
\begin{align*}
X_T = 10^{-2} \cdot I_2 \quad \text{and} \quad Y_T = 10^{2} \cdot I_2. 
\end{align*}

With~$T=20$, the distance between $X_k$ and $Y_k$ is measured by the Riemannian distance $\delta(X_k,Y_k)$ and the induced $2$-norm $\|X_k - Y_k\|_2$, respectively. The results are plotted in Figure~\ref{fig:distance}.

\begin{figure}[h]
    \centering
    \includegraphics[width=\linewidth]{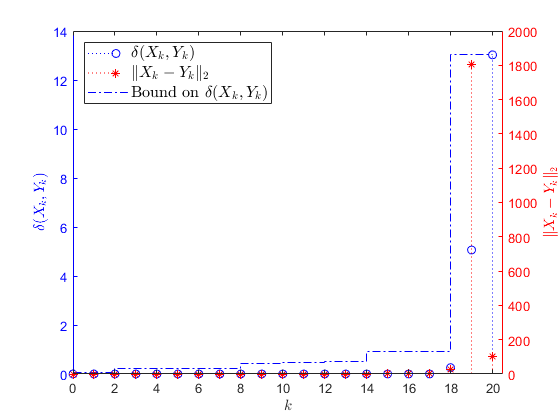}
    \caption{Distance between $X_k$ and $Y_k$.}
    \label{fig:distance}
\end{figure}

Given the uniform controllability and observability index~$d=2$, the system model in the lifted reformulation of the problem evolves by $2$ steps per stage. According to Theorem~\ref{theorem:contraction_lifted}, the Riccati operator in the lifted domain is strictly contractive with respect to the Riemannian distance, with time-varying rate of contraction, as shown in Figure\ref{fig:distance}.

Observe from Figure~\ref{fig:distance} that the Riccati operator is not initially a  contraction with respect to the induced $2$-norm.

\section{Conclusion}\label{sec:conclusion}
Our attention is focused on the non-stationary Riccati operator associated with the time-varying LQ control problem. The lifting approach presented in this paper provides a procedure to measure the strict contraction rate of the non-stationary Riccati operator. Further extensions to the results in this paper may be possible by replacing the controllability and observability assumptions with weaker assumptions such as stabilizability and detectability. Future work is focused on the impact of error in the cost-to-go approximations on the performance of the receding horizon scheme.

\printbibliography

@article{grune2008infinite,
  title={On the infinite horizon performance of receding horizon controllers},
  author={Grune, Lars and Rantzer, Anders},
  journal={IEEE Transactions on Automatic Control},
  volume={53},
  number={9},
  pages={2100--2111},
  year={2008},
  publisher={IEEE}
}

@book{bertsekas2012dynamic,
  title={{D}ynamic {P}rogramming and {O}ptimal {C}ontrol: {V}olume {I}},
  author={Bertsekas, Dimitri},
  volume={1},
  year={2012},
  publisher={Athena scientific}
}

@book{anderson2007optimal,
  title={{O}ptimal {C}ontrol: {L}inear {Q}uadratic {M}ethods},
  author={Anderson, Brian DO and Moore, John B},
  year={2007},
  publisher={Courier Corporation}
}

@article{zhang2021regret,
  title={On the Regret Analysis of Online {LQR} Control with Predictions},
  author={Zhang, Runyu and Li, Yingying and Li, Na},
  journal={arXiv preprint arXiv:2102.01309},
  year={2021}
}

@article{de1992time,
  title={On the time-varying {R}iccati difference equation of optimal filtering},
  author={De Nicolao, Giuseppe},
  journal={SIAM Journal on Control and Optimization},
  volume={30},
  number={6},
  pages={1251--1269},
  year={1992},
  publisher={SIAM}
}

@article{bougerol1993kalman,
  title={Kalman filtering with random coefficients and contractions},
  author={Bougerol, Philippe},
  journal={SIAM Journal on Control and Optimization},
  volume={31},
  number={4},
  pages={942--959},
  year={1993},
  publisher={SIAM}
}

@article{li2022performance,
  title={Performance Bounds of Model Predictive Control for Unconstrained and Constrained Linear Quadratic Problems and Beyond},
  author={Li, Yuchao and Karapetyan, Aren and Lygeros, John and Johansson, Karl H and M{\aa}rtensson, Jonas},
  journal={arXiv preprint arXiv:2211.06187},
  year={2022}
}

@article{lin2021perturbation,
  title={Perturbation-based regret analysis of predictive control in linear time varying systems},
  author={Lin, Yiheng and Hu, Yang and Shi, Guanya and Sun, Haoyuan and Qu, Guannan and Wierman, Adam},
  journal={Advances in Neural Information Processing Systems},
  volume={34},
  pages={5174--5185},
  year={2021}
}

@article{jadbabaie2005stability,
  title={On the stability of receding horizon control with a general terminal cost},
  author={Jadbabaie, Ali and Hauser, John},
  journal={IEEE Transactions on Automatic Control},
  volume={50},
  number={5},
  pages={674--678},
  year={2005},
  publisher={IEEE}
}

@incollection{bitmead1991riccati,
  title={Riccati difference and differential equations: Convergence, monotonicity and stability},
  author={Bitmead, Robert R and Gevers, Michel},
  booktitle={The Riccati Equation},
  pages={263--291},
  year={1991},
  publisher={Springer}
}

@article{keerthi1988optimal,
  title={Optimal infinite-horizon feedback laws for a general class of constrained discrete-time systems: Stability and moving-horizon approximations},
  author={Keerthi, S Sathiya and Gilbert, Elmer Grant},
  year={1988},
  journal={Journal of Optimization Theory and Applications}, 
  pages={265--293},
  volume={57}
}

@article{sun2023receding,
  title={On receding-horizon approximation in time-varying optimal control},
  author={Sun, Jintao and Cantoni, Michael},
  journal={arXiv preprint arXiv:2305.06010},
  year={2023}
}

\appendices

\section{Proof of Lemma~\ref{lemma:rewrite_r}}\label{sec:proof_rewrite_r}
First note that 
\begin{align*}
\mathcal{R}_k(P)&= Q_k + A_k^{\trs} \left(P - PB_k(R_k\!+\!B_k^{\trs}PB_k)^{-1}B_k^{\trs}P\right) A_k\\
&= Q_k + A_k^{\trs}P\left(I\!-\!B_k(R_k\!+\!B_k^{\trs}PB_k)^{-1}B_k^{\trs}P\right) A_k\\
&= Q_k + A_k^{\trs}P(I + B_kR_k^{-1}B_k^{\trs}P)^{-1} A_k,
\end{align*}
where the last equality holds by an application of the Woodbury matrix identity. As such, it follows that
\begin{align*}
\mathcal{R}_k(P)
&= Q_k + A_k^{\trs}P(A_k^{-1} + A_k^{-1}B_k R_k^{-1} B_k^{\trs} P)^{-1}\\
&= \left(Q_k(A_k^{-1} + A_k^{-1}B_k R_k^{-1} B_k^{\trs} P) + A_k^{\trs}P\right) \\
&\mathrel{\phantom{=}} \quad \times (A_k^{-1} + A_k^{-1}B_k R_k^{-1} B_k^{\trs} P)^{-1} \\
&= \left((A_k^{\trs}+Q_kA_k^{-1}B_kR_k^{-1}B_k^{\trs})P + Q_kA_k^{-1} \right) \\
&\mathrel{\phantom{=}} \quad \times ( A_k^{-1}B_k R_k^{-1} B_k^{\trs} P + A_k^{-1})^{-1}
\end{align*}
in accordance with~\eqref{eq:ricc_2} and~\eqref{eq:efgh}.
\qed

\section{Proof of Lemma~\ref{lemma:new_cost}}\label{sec:proof_new_cost}
With reference to \eqref{eq:opt_problem}, \eqref{eq:lifted_x}, and \eqref{eq:ChatRhat}, let 
\begin{align*}
w_t := \hat{C}_t \begin{bmatrix}
x_{dt} \\ \vdots \\ x_{d(t+1)-1} \\ \tilde{x}_{t+1}
\end{bmatrix} 
\end{align*}
for each $t\in\mathbb{N}_0$. Then, in view of \eqref{eq:lifted_u}, \eqref{eq:dynamics_shorthand}, and \eqref{eq:XiDelta}, 
\begin{align*}
w_t = \hat{C}_t \hat{A}_t^{-1} \left( \hat{B}_t \hat{u}_t +
\begin{bmatrix}
\tilde{x}_t \\ 0_{nd,1}
\end{bmatrix} \right) = \Xi_t \tilde{x}_t + \Delta_t \hat{u}_t ,
\end{align*}
and the cost in  \eqref{eq:opt_problem} for the given $u$ can be written as
\begin{align*}
&\sum_{t \in \mathbb{N}_0} w_t^{\trs} w_t + \hat{u}_t^{\trs} \hat{R}_t \hat{u}_t \\
&=\!\sum_{t \in \mathbb{N}_0} \! \tilde{x}_t^{\trs} \Xi_t^{\trs} \Xi_t \tilde{x}_t \!+\! \hat{u}_t^{\trs} (\hat{R}_t \!+\! \Delta_t^{\trs}\Delta_t) \hat{u}_t \!+\! \tilde{x}_t^{\trs}\Xi_t^{\trs}\Delta_t \hat{u}_t \!+\! \hat{u}_t^{\trs}\Delta_t^{\trs}\Xi_t\tilde{x}_t .
\end{align*}
This is~\eqref{eq:new_cost}.
\qed

\section{Proof of Lemma~\ref{lemma:tilde_a_nonsingular}}\label{sec:tilde_a_nonsingular}
Let
\begin{align*}
M := \begin{bmatrix}
\Phi_t & \Gamma_t \\
\Delta_t^{\trs}\Xi_t & \hat{R}_t+\Delta_t^{\trs}\Delta_t
\end{bmatrix} .
\end{align*}
The $22$-block of~$M$ is positive definite and invertible, and its Schur complement is given by
\begin{align} \label{eq:Atil}
\tilde{A}_t := \Phi_t - \Gamma_t (\hat{R}_t + \Delta_t^{\trs}\Delta_t)^{-1} \Delta_t^{\trs} \Xi_t.
\end{align}
Under Assumption~\ref{asm:a_invertible}, the $11$-block
\begin{align*}
\Phi_t = A_{d-1} A_{d-2} \cdots A_0
\end{align*}
is also invertible, and its Schur complement is
\begin{align}\label{eq:schur}
\hat{R}_t + \Delta_t^{\trs}\Delta_t - \Delta_t^{\trs}\Xi_t\Phi_t^{-1}\Gamma_t.
\end{align}
Note that invertibility of \eqref{eq:schur} is equivalent to invertibility of $M$, and thus, invertibility of $\tilde{A}_t$ in \eqref{eq:Atil}. 
By exploiting the structure of 
\begin{align}
&\Delta_t^{\trs}\Delta_t - \Delta_t^{\trs}\Xi_t\Phi_t^{-1}\Gamma_t \nonumber \\
&= \! \hat{B}_t^{\trs}(\hat{A}_t^{\trs})^{-1}\hat{C}_t^{\trs}\hat{C}_t\hat{A}_t^{-1}\hat{B} \nonumber \\
&\quad - \hat{B}_t^{\trs}(\hat{A}_t^{\trs})^{-1}\hat{C}_t^{\trs}\hat{C}_t\hat{A}_t^{-1}\!
\begin{bmatrix} I_n \\ 0_{nd,n} \end{bmatrix} \Phi_t^{-1} \begin{bmatrix} 0_{nd,n} \\ I_n \end{bmatrix}^{\trs}\!\!  \hat{A}_t^{-1}\hat{B}_t \nonumber \\
& = \! \hat{B}_t^{\trs}(\hat{A}_t^{\trs})^{-1}\hat{C}_t^{\trs}\hat{C}_t ( \hat{A}_t^{-1}\!\! -\! \hat{A}_t^{-1}\!
\begin{bmatrix} I_n \\ 0_{nd,n} \end{bmatrix} \Phi_t^{-1} \begin{bmatrix} 0_{nd,n} \\ I_n \end{bmatrix}^{\trs} \!\! \hat{A}_t^{-1} ) \hat{B}_t , \label{eq:schur_psd}
\end{align}
where~\eqref{eq:phi_compact}, \eqref{eq:gamma_compact}, and \eqref{eq:XiDelta}, have been used to arrive at the expression~\eqref{eq:schur_psd}, it can be shown that \eqref{eq:schur} is non-singular. In particular, the structure of \eqref{eq:schur_psd} is block upper-triangular with zero diagonal blocks, and therefore, \eqref{eq:schur} is block upper-triangular, with positive definite diagonal blocks corresponding to those of $\hat{R}_t$. 

First, define the following auxiliary objects
\begin{align*}
\tilde{Y} &:=
\begin{bmatrix}
A_0^{\trs} & (A_1A_0)^{\trs} & \dots & (A_{d-2} \cdots A_0)^{\trs}
\end{bmatrix}^{\trs} , \\
\tilde{X} &:=
\begin{bmatrix}
A_{d-1} \cdots A_1 & A_{d-1} \cdots A_2 & \dots & A_{d-1}
\end{bmatrix} , \\
\tilde{Z} &:=
\begin{bmatrix}
I_n & 0 & \dots & \dots & 0 \\
A_1 & I_n & \ddots &  & \vdots \\
A_2A_1 & A_2 & \ddots & \ddots & \vdots \\
\vdots & \vdots & \ddots & \ddots & 0 \\
A_{d-2} \cdots A_1 & A_{d-2} \cdots A_2 & \dots & A_{d-2} & I_n
\end{bmatrix} .
\end{align*}
The matrix~$\hat{A}_t^{-1}$ is block lower triangular, and all diagonal blocks 
are~$I_n$. In particular,
\begin{align*}
\hat{A}_t^{-1} = \begin{bmatrix}
Y & Z \\
\Phi_t & X
\end{bmatrix} ,
\end{align*}
where
\begin{align*}
Y &:= \begin{bmatrix}
I_n \\ \tilde{Y}
\end{bmatrix} , \\
X &:= \begin{bmatrix}
\tilde{X} & I_n
\end{bmatrix} , \\
Z &:= \begin{bmatrix}
0_{n,nd} \\ \begin{bmatrix} \tilde{Z} & 0_{n(d-1),n} \end{bmatrix}
\end{bmatrix} .
\end{align*}
Therefore,
\begin{align*}
\hat{A}_t^{-1}
\begin{bmatrix} I_n \\ 0_{nd,n} \end{bmatrix} \Phi_t^{-1} \begin{bmatrix} 0_{nd,n} \\ I_n \end{bmatrix}^{\trs} \hat{A}_t^{-1} 
&= \begin{bmatrix}
Y \\ \Phi_t
\end{bmatrix}
\Phi_t^{-1}
\begin{bmatrix}
\Phi_t & X
\end{bmatrix} \\
&= \begin{bmatrix}
Y & Y \Phi_t^{-1} X \\
\Phi_t & X
\end{bmatrix} ,
\end{align*}
and
\begin{align}
&\hat{A}_t^{-1} - \hat{A}_t^{-1}
\begin{bmatrix} I_n \\ 0_{nd,n} \end{bmatrix} \Phi_t^{-1} \begin{bmatrix} 0_{nd,n} \\ I_n \end{bmatrix}^{\trs} \hat{A}_t^{-1} \nonumber\\
&\qquad\qquad =\begin{bmatrix}
0_{nd,n} & Z-Y\Phi_t^{-1}X \\
0_{n,n} & 0_{n,nd}
\end{bmatrix} . \label{eq:auxiliary_up_tri}
\end{align}
Note that
\begin{align*}
Y\Phi_t^{-1}X 
&= \begin{bmatrix}
I_n \\ \tilde{Y}
\end{bmatrix}
\Phi_t^{-1}
\begin{bmatrix}
\tilde{X} & I_n
\end{bmatrix} \\
&= \begin{bmatrix}
I_n \\ A_0 \\ A_1A_0 \\ \vdots \\ A_{d-2} \cdots A_0
\end{bmatrix}
A_0^{-1}A_1^{-1} \cdots A_{d-1}^{-1} \\
&\times
\begin{bmatrix}
A_{d-1} \cdots A_1 & A_{d-1} \cdots A_2 & \dots & A_{d-1} & I_n
\end{bmatrix} \\
&= Z + U ,
\end{align*}
where
\begin{align*}
U =
\begin{bmatrix}
A_0^{-1} & A_0^{-1}A_1^{-1} & \dots & A_0^{-1} \cdots A_{d-1}^{-1} \\
0 & A_1^{-1} & \dots & A_1^{-1} \cdots A_{d-1}^{-1} \\
\vdots & \ddots & \ddots & \vdots \\
0 & \dots & 0 & A_{d-1}^{-1}
\end{bmatrix}
\end{align*}
is block upper triangular.
With~\eqref{eq:auxiliary_up_tri}, it follows that
\begin{align*}
\hat{A}_t^{-1} - \hat{A}_t^{-1}
\begin{bmatrix} I_n \\ 0_{nd,n} \end{bmatrix} \Phi_t^{-1} \begin{bmatrix} 0_{nd,n} \\ I_n \end{bmatrix}^{\trs} \hat{A}_t^{-1}
= \begin{bmatrix}
0_{nd,n} & -U \\
0_{n,n} & 0_{n,nd}
\end{bmatrix},
\end{align*}
and therefore,
\begin{align}
&\hat{C}_t \left( \hat{A}_t^{-1} - \hat{A}_t^{-1}
\begin{bmatrix} I_n \\ 0_{nd,n} \end{bmatrix} \Phi_t^{-1} \begin{bmatrix} 0_{nd,n} \\ I_n \end{bmatrix}^{\trs} \hat{A}_t^{-1} \right) \hat{B}_t \nonumber \\
&= \begin{bmatrix}
\mathop{\oplus}_{j=0}^{d-1}C_{dt+j} & 0_{nd,n}
\end{bmatrix}
\begin{bmatrix}
0_{nd,n} & -U \\
0_{n,n} & 0_{n,nd}
\end{bmatrix}
\begin{bmatrix}
0_{n,md} \\ \mathop{\oplus}_{j=0}^{d-1}B_{dt+j}
\end{bmatrix} , \label{eq:auxiliary_2}
\end{align}
is block upper triangular.
Further,
\begin{align*}
&\hat{B}_t^{\trs} (\hat{A}_t^{\trs})^{-1} \hat{C}^{\trs} \\
&= \begin{bmatrix}
0_{n,md} \\ \mathop{\oplus}_{j=0}^{d-1}B_{dt+j}
\end{bmatrix}
\begin{bmatrix}
Y^{\trs} & \Phi_t^{\trs} \\
Z^{\trs} & X^{\trs}
\end{bmatrix}
\begin{bmatrix}
\mathop{\oplus}_{j=0}^{d-1}C_{dt+j} & 0_{nd,n}
\end{bmatrix} \\
&= \left(\mathop{\oplus}_{j=0}^{d-1}B_{dt+j}\right) Z^{\trs} \left(\mathop{\oplus}_{j=0}^{d-1}C_{dt+j}\right) \\
&= \left(\mathop{\oplus}_{j=0}^{d-1}B_{dt+j}\right)
\begin{bmatrix}
0_{nd,n} & \begin{bmatrix} \tilde{Z}^{\trs} \\ 0_{n,n(d-1)} \end{bmatrix}
\end{bmatrix}
\left(\mathop{\oplus}_{j=0}^{d-1}C_{dt+j}\right) ,
\end{align*}
which is again block upper triangular, now with all zero diagonal blocks.
In conjunction with~\eqref{eq:auxiliary_2}, it follows that~\eqref{eq:schur_psd} is the product of two block upper-triangular matrices, one with zero matrices on the main diagonal. Therefore, \eqref{eq:schur_psd} is also block upper-triangular, with all zero diagonal blocks, as claimed above.
\qed

\section{Proof of Proposition~\ref{prop:composition}}\label{sec:proof_composition}

Given $\xi_t\in\mathbb{R}^n$, consider the finite-horizon LQ control problem
\begin{align}
\min_{u,x} &\sum_{k=dt}^{d(t+1)-1} x_{k}^{\trs}Q_k x_k + u_k^{\trs}R_k u_k + x_{d(t+1)}^{\trs} P x_{d(t+1)} \label{eq:cost_d} \\
\intertext{subject to $x_{dt}=\xi_t$ and}
x_{k+1}&=A_k x_k + B_k u_k, \quad k\in[dt : d(t+1)-1] . \nonumber
\end{align}
The optimal cost~\eqref{eq:cost_d} is given by
\begin{align*}
    V(\xi_t) = \xi_t^{\trs} \left( \mathcal{R}_{dt}\circ\mathcal{R}_{dt+1} \circ \cdots \circ\mathcal{R}_{d(t+1)-1}(P) \right) \xi_t ;
\end{align*}
see~\cite{anderson2007optimal,bertsekas2012dynamic}.
Now, in the vein of Lemma~\ref{lemma:equivalence}, it can be shown that the optimal cost \eqref{eq:cost_d} is equal to
\begin{align}
\min_{\tilde{x},\tilde{u}} \ & \tilde{x}_{t}^{\trs} \tilde{Q}_t \tilde{x}_t + \tilde{v}_t^{\trs} \tilde{R}_t \tilde{u}_t + \tilde{x}_{t+1}^{\trs} P \tilde{x}_{t+1} \label{eq:cost_lifted} \\
\intertext{subject to $\tilde{x}_t = \xi_t$ and}
\tilde{x}_{t+1} &= \tilde{A}_t \tilde{x}_t + \tilde{B}_t \tilde{u}_t . 
\end{align}
In turn, the optimal cost~\eqref{eq:cost_lifted} is given by $V(\xi_t)=\xi_t^{\trs} \tilde{\mathcal{R}}_t(P) \xi_t$. Therefore, the result holds as $\xi_t$ is arbitrary above. 
\qed

\end{document}